\newtheorem*{maintheorem}{Main Theorem}
\newtheorem{theorem}{Theorem}[section]
\newtheorem{lemma}[theorem]{Lemma}
\newtheorem{proposition}[theorem]{Proposition}
\newtheorem{corollary}[theorem]{Corollary}
{\theorembodyfont{\rmfamily}
  \newtheorem{example}[theorem]{Example}
   
  \newtheorem{definition}[theorem]{Definition}}
\newenvironment{proof}{\noindent\textit{Proof.}}{\QED\vskip\theorempostskipamount} 
\def\petitcarre{\vrule height4pt width 4pt depth0pt}
\def\QED{\relax\ifmmode\eqno{\hbox{\petitcarre}}\else{%
  \unskip\nobreak\hfil\penalty50\hskip2em\hbox{}\nobreak\hfil
  \petitcarre
  \parfillskip=0pt \finalhyphendemerits=0\par\smallskip}
  \fi}
\numberwithin{equation}{section}
\numberwithin{figure}{section}
\newcommand\cL{\mathcal{L}}
\newcommand{\Z}{\mathbb{Z}}
\DeclareMathOperator{\Card}{Card}
\definecolor{lime}{HTML}{A6CE39}
\DeclareRobustCommand{\orcidicon}{%
	\begin{tikzpicture}
	\draw[lime, fill=lime] (0,0)
	circle [radius=0.16]
	node[white] {{\fontfamily{qag}\selectfont \tiny ID}};
	\draw[white, fill=white] (-0.0625,0.095)
	circle [radius=0.007];
	\end{tikzpicture}
	\hspace{-2mm}
}
\xdef\csname orcid\x\endcsname{\noexpand%
 \href{https://orcid.org/\csname orcidauthor\x\endcsname}{\noexpand\orcidicon}}
\title{Recognizability in $S$-adic shifts}
\author{Marie-Pierre B\'eal\orcidA{}, Dominique Perrin, \\ Antonio Restivo, and Wolfgang Steiner}
\begin{document}
\maketitle
\begin{abstract}
We investigate questions related to the notion of recognizability of sequences of morphisms, a generalization of Moss\'e's Theorem.
We consider the most general class of morphisms including ones with erasable letters.
The main result states that a sequence of morphisms with finite alphabet rank is eventually recognizable for aperiodic points, improving and simplifying a result of Berth\'e et al.\ (2019). 
This also provides a new simple proof for the recognizability of a single morphism on its shift space.
The main ingredient of the proof is elementary morphisms.
\end{abstract}

\section{Introduction}
Given a bi-infinite sequence in $A^{\Z}$ and a morphism (also called a substitution) $\sigma\colon A^*\to A^*$, recognizability is a form of injectivity of~$\sigma$ that allows one to uniquely desubstitute $y$ to another sequence~$x$, i.e., to express $y$ as a concatenation of substitution words dictated by the letters in~$x$. 
The sequences $y$ and~$x$ are traditionally required to be in the shift space $X(\sigma)$, which is the set of bi-infinite sequences (also called points) whose finite factors are factors of $\sigma^n(a)$ for some integer $n$ and some letter $a$ in~$A$. 

By Moss\'e's Theorem \cite{Mosse1992,Mosse1996}, every aperiodic
primitive morphism~$\sigma$ is recognizable in the shift $X(\sigma)$; see the precise definitions in Section~\ref{sectionRecognizability}. 
This surprising result was initially formulated (in an incorrect 
way) by \cite{Martin1973}; see \cite{Kyriakoglou2019} on the genesis
of the theorem and its possible variants. It was further generalized
by Bezuglyi, Kwiatkowski, and Medynets~\cite{BezuglyiKwiatkowskiMedynets2009},
who proved that every aperiodic non-erasing morphism $\sigma$
is recognizable in $X(\sigma)$. Next, it was proved 
by Berth\'e, Thuswaldner, Yassawi, and the fourth author~\cite{BertheSteinerThuswaldnerYassawi2019}
that every non-erasing morphism~$\sigma$ is recognizable in $X(\sigma)$
for aperiodic points, and the first three authors proved in~\cite{BealPerrinRestivo2023}
that every morphism $\sigma$ is recognizable in $X(\sigma)$ for aperiodic points.

There is a strong link between recognizability and automata theory due
to a translation of the property of recognizability in terms of finite
monoids. For instance, there is a quadratic-time algorithm to check whether an
injective morphism is recognizable in the full shift for aperiodic
points~\cite{BealPerrinRestivo2023}.

In this paper, we investigate recognizability in the
context of sequences of morphisms
$\sigma = (\sigma_n\colon A_{n+1}^* \to A_n^*)_{n \geq 0}$. 
Such a sequence defines an $S$-adic shift, generated by
iterations of the form
$\sigma_0 \circ \sigma_1 \circ \cdots \circ \sigma_n$; see
Section~\ref{sectionRecognizabilitySadic} for the definition.
In fact, $\sigma$ defines a sequence $X^{(n)}(\sigma)$ of shift spaces and here, by recognizability of $\sigma_n$, we mean that any
sequence in $X^{(n)}(\sigma)$ can be desubstituted in at most one (and usually exactly one) way as sequences in $X^{(n+1)}(\sigma)$ using~$\sigma_n$.
We distinguish between
recognizability of $\sigma$, where each $\sigma$ is recognizable,
and eventual recognizability, where
all but finitely many morphisms $\sigma_n$ are recognizable.
We consider the most general class of sequences of morphisms,
including ones with morphisms with erasable letters.

Recognizability of $S$-adic shifts has been studied in~\cite{BertheSteinerThuswaldnerYassawi2019}, where it is proved that
a morphism $\sigma$ is eventually recognizable for aperiodic points in its
shift spaces
under some mild conditions:
the morphisms $\sigma_n$ are non-erasing, the sequence $\Card(A_n)$ contains a bounded subsequence, and $\sigma$ is everywhere growing (or the
points in each $X^{(n)}(\sigma)$ generate a bounded number of
different languages). This means that, for large enough $n$,
every aperiodic point  in $X^{(n)}(\sigma)$
has a unique centered $\sigma_n$-represen\-tation as a shift
of the image by $\sigma_n$ of some $x\in X^{(n+1)}(\sigma)$.
This recognizability property implies a natural
representation of an $S$-adic shift as a Bratteli–Vershik system~\cite{BertheSteinerThuswaldnerYassawi2019}.
A weaker notion, called quasi-recognizability, is studied in \cite{BustosGajardoManiboYassawi2023}.

The main result of this paper is a generalization to possibly erasing
sequences of morphisms of the result of
\cite{BertheSteinerThuswaldnerYassawi2019} concerning
the recognizability of a sequence of morphisms in its shift spaces
for aperiodic points. 

When a morphism $\sigma_n$ erases a letter, it is possible that a sequence in $X^{(n)}(\sigma)$ cannot be desubstituted as a sequence in $X^{(n+1)}(\sigma)$ using~$\sigma_n$.
We are therefore not only concerned with recognizability but also with representability, which means that $X^{(n)}(\sigma)$ is the shift closure of the image of $X^{(n+1)}(\sigma)$ by~$\sigma_n$; see Section~\ref{sec:representable-s-adic} for details.

We prove the following result, where the alphabet rank of a sequence of morphisms $(\sigma_n\colon A_{n+1}^* \to A_n^*)_{n \geq 0}$ is $\liminf_{n\to\infty} \Card(A_n)$; a more precise statement is given in Section~\ref{sec:levels-recogn-repr}.

\begin{maintheorem}
Any sequence of morphisms with finite alphabet rank is eventually recognizable for aperiodic points and eventually representable.
\end{maintheorem}

Our proof is much simpler than that of \cite{BertheSteinerThuswaldnerYassawi2019}, and we we do not require that $\Card(\{\cL_x \mid x \in X^{(n)}(\sigma)\})$
is bounded, where $\cL_x$ is the set of factors of a point~$x$.
Moreover, our proof gives a bound equal to the alphabet rank minus $2$
on the number of levels at which $\sigma$ is not recognizable for aperiodic points, improving the bound of order $K(K + L \log K)$ obtained in \cite{BertheSteinerThuswaldnerYassawi2019}
for alphabets of size at most $K$ and $\Card(\{\cL_x \mid x \in
X^{(n)}(\sigma)\}) \leq L$. We also show that this bound is tight.

Our result allows one also to get a new simpler proof of the recognizability of a (possibly erasable) morphism~$\sigma$ on $X(\sigma)$ for aperiodic points obtained in~\cite{BealPerrinRestivo2023}.

As in \cite{BealPerrinRestivo2023}, our
proof relies on the notion of elementary morphism,
due to Ehrenfeucht and Rozenberg \cite{EhrenfeuchtRozenberg1978}.
By a result of Karhum\"aki, Ma\v{n}uch and Plan\-dowski~\cite{KarhumakiManuchPlandowski2003}, every
elementary morphism is recognizable for aperiodic points; see also~\cite{BertheSteinerThuswaldnerYassawi2019}.
We use this result to prove eventual recognizability.

The paper is organized as follows. After an introductory section
on basic notions of symbolic dynamics, we formulate
the precise definition of a morphism recognizable
on a shift space and prove some elementary properties
of recognizable morphisms. In Section \ref{sectionElementary},
we introduce elementary morphisms and recall that every
elementary morphism is recognizable for
aperiodic points (Proposition~\ref{theoremIndecomposable}).
The main results are proved in
Section~\ref{sectionRecognizabilitySadic}.

\section{Symbolic dynamics}
We briefly recall some basic definitions of symbolic dynamics. For
a more complete presentation, see~\cite{LindMarcus2021}
or the recent~\cite{DurandPerrin2021}.

\subsection{Words}
Let $A$ be a finite alphabet. We let $A^*$ denote the free
monoid on~$A$, i.e., the set of finite words over the alphabet~$A$.
The empty word is denoted by~$\varepsilon$.
We let $|u|$ denote the \emph{length} of the word $u$.

A word $s\in A^*$ is a \emph{factor} of $w\in A^*$ if $w=rst$; the word $r$ is called a
\emph{prefix} of $w$, and it is \emph{proper} if $r\ne w$.

\subsection{Shift spaces}
We consider the set $A^\Z$ of two-sided infinite sequences (also
called points) on~$A$.
For $x=(x_n)_{n\in\Z}$, and $i\le j$, we let $x_{[i,j)}$ denote the word $x_ix_{i+1}\cdots x_{j-1}$, where $x_{[i,i)}$ is the empty word; the word $x_{[i,j)}$ is called a \emph{factor} of~$x$.

The set $A^\Z$ is a compact metric space for the distance defined for $x\ne y$
by $d(x,y)=2^{-\min\{|n|\,\mid\, n\in\Z, x_n\ne y_n\}}$.
The \emph{shift transformation} $T \colon A^\Z\to A^\Z$ is defined
by $T((x_n)_{n\in\Z}) = (x_{n+1})_{n\in\Z}$.
A~\emph{shift space} $X$ on a finite alphabet $A$ is a closed
and shift-invariant subset of $A^\Z$.

A~point $x\in A^\Z$ is \emph{periodic} if there is an $n\ge 1$
such that $T^n(x)=x$. Otherwise, it is \emph{aperiodic}.
A periodic point has the form $w^\infty=\cdots ww\cdot ww\cdots$ (the letter of index $0$ of $w^\infty$ is the first letter of $w$).

\subsection{Morphisms}
A morphism $\sigma\colon A^*\to B^*$ is a monoid morphism from $A^*$ to $B^*$.
It is \emph{erasing} if $\sigma(a)$ is the empty word for some $a \in A$, \emph{non-erasing} otherwise.  

The morphism $\sigma$ is extended to a map from $A^\Z$ to $B^\Z$ by
\[
\sigma(\cdots x_{-2} x_{-1} \cdot x_0 x_1 \cdots) = \cdots \sigma(x_{-2}) \sigma(x_{-1}) \cdot \sigma(x_0) \sigma(x_1) \cdots,
\]
i.e., $\sigma((x_n)_{n\in\Z}) = (y_n)_{n\in\Z}$ with $y_{[|\sigma(x_{[0,n)})|, |\sigma(x_{[0,n+1)})|)} = \sigma(x_n)$ for all $n \ge 0$ and $y_{[|\sigma(x_{[n,0)})|, |\sigma(x_{[n+1,0)})|)} = \sigma(x_n)$ for all $n < 0$; this map is defined only for points in $A^\Z$ containing infinitely many letters on the left and infinitely many letters on the right that are not erased.

Let $\sigma\colon A^*\to A^*$ be a morphism from $A^*$ to itself.
For $n \geq 0$, we let $\sigma^n$ denote the morphism obtained with $n$ iterations of $\sigma$.
The \emph{language of~$\sigma$}, denoted~$\cL(\sigma)$,
is the set of factors of the words $\sigma^n(a)$ for
some $n\ge 0$ and $a\in A$.
The \emph{shift defined by~$\sigma$}, denoted by $X(\sigma)$,
is the set of sequences with all their factors in~$\cL(\sigma)$.
The morphism~$\sigma$ is \emph{primitive} if there exists $n \ge 1$ such that
the word $\sigma^n(a)$ contains the letter~$b$ for all $a, b \in A$. 

 \subsection{$S$-adic shifts} \label{sectionSadic}
Let $\sigma = (\sigma_n)_{n \geq 0}$ be a sequence of morphisms
$\sigma_n\colon A_{n+1}^*\to A_n^*$, where $A_n$ are finite alphabets:
\begin{displaymath}
A_0^* \xleftarrow{\sigma_0} A_1^* \xleftarrow{\sigma_1} A_2^*  \xleftarrow{\sigma_2} \cdots \xleftarrow{\sigma_{n-1}} A_n^* \xleftarrow{\sigma_n} A_{n+1}^*  \xleftarrow{\sigma_{n+1}} \cdots.
\end{displaymath}
For $0 \leq n \le  m$, we define the morphism $\sigma_{[n,m)} \colon A_m^* \to A_n^*$ by
\begin{displaymath}
\sigma_{[n, m)} = \sigma_n \circ \sigma_{n+1} \circ \cdots \circ \sigma_{m-1},
\end{displaymath}
where $\sigma_{[n,n)}$ is the identity. 
For $n \geq 0$, the language of $\cL^{(n)}(\sigma)$ is the subset of $A_n^*$ of factors of the words $\sigma_{[n, m)}(a)$, $a\in A_m$, $m \ge n$, and the shift $X^{(n)}(\sigma)$ is the set of sequences with all their factors in $\cL^{(n)}(\sigma)$.
The \emph{$S$-adic shift} defined by~$\sigma$ is~$X^{(0)}(\sigma)$.

A~sequence of morphisms $\sigma = (\sigma_n)_{n \geq 0}$ is
\emph{non-erasing} if all $\sigma_n$ are non-erasing.
It is primitive if for each $n \ge 0$ there exists $m > n$ such that
the word $\sigma_{[n,m)}(a)$ contains the letter~$b$ for all $a \in A_m$, $b \in A_n$. 

\section{Recognizable morphisms}\label{sectionRecognizability}
Let $\sigma\colon A^*\to B^*$ be a morphism. A \emph{$\sigma$-representation}
of $y\in B^\Z$ is a pair $(x,k)$ of a sequence $x\in A^\Z$
and an integer $k$  such that
\begin{equation}
y = T^k(\sigma(x)), \label{eqsigmaRep}
\end{equation}
where $T$ is the shift transformation.
The $\sigma$-representation $(x,k)$ is \emph{centered}
if $0\le k<|\sigma(x_0)|$.
In particular, a centered $\sigma$-representation $(x,k)$ satisfies $\sigma(x_0)\ne\varepsilon$. 
We say that the $\sigma$-representation $(x,k)$ is \emph{in $X$} if $x \in X$.

Note that, if $y$ has a $\sigma$-representation $(x,k)$, then it has a
centered $\sigma$-representation $(x',k')$ with $x'$ a shift of~$x$.

\begin{definition}
Let $X$ be a shift space on $A$.
A morphism $\sigma\colon A^*\to B^*$ is \emph{recognizable in $X$}
(respectively recognizable in $X$
for aperiodic points) if every point in $B^\Z$ (respectively
every aperiodic point in $B^\Z$) has at most one centered
$\sigma$-representation in~$X$.
A morphism $\sigma\colon A^*\to B^*$ is \emph{fully recognizable}
(respectively fully recognizable
for aperiodic points) if it is recognizable in~$A^\Z$ (respectively
recognizable in $A^\Z$
for aperiodic points).
\end{definition}

Note that an equivalent definition of recognizability in $X$ is that, for every $x,x'\in X$ and $0\le k<|\, |\sigma(x'_0)|-|\sigma(x_0)|\, |$ such that
$\sigma(x) = T^k(\sigma(x'))$, one has $k=0$ and $x=x'$.

\begin{example}
  The Fibonacci morphism $\sigma\colon a\mapsto ab,b\mapsto a$
  is fully recognizable.
\end{example}
\begin{example}
  The Thue-Morse morphism $\sigma\colon a\mapsto ab,b\mapsto ba$ is
  not fully recognizable since $(ab)^\infty$
  can be obtained as $\sigma(a^\infty)$ and as $T(\sigma(b^\infty))$.
  However, it is fully recognizable for aperiodic points since
  any sequence containing $aa$ or $bb$ has at most one factorization
  in $\{ab,ba\}$.
\end{example}
\begin{example}\label{example3.4}
  The morphism $\sigma\colon a\to aa,b\mapsto ab,c\mapsto ba$
  is not fully recognizable for aperiodic points. Indeed,
  every sequence without occurrence of $bb$ has
  two factorizations in words of $\{aa,ab,ba\}$.
  \end{example}

By \cite{BertheSteinerThuswaldnerYassawi2019,BealPerrinRestivo2023}, the family of morphisms recognizable for aperiodic points is closed under composition.

\section{Elementary morphisms}\label{sectionElementary}
\begin{definition}
A morphism $\sigma\colon A^*\to C^*$ is \emph{elementary}
if
for every alphabet $B$ and
every pair of morphisms $\alpha\colon B^*\to C^*$ and $\beta\colon A^*\to B^*$
such that $\sigma=\alpha\circ\beta$, one has
$\Card(B)\ge \Card(A)$.
\end{definition}

If $\sigma\colon A^*\to C^*$ is elementary,  one has in
particular $\Card(C)\ge\Card(A)$ and moreover $\sigma$ is non-erasing.

\begin{example}\label{exampleTMelementary}
The Thue-Morse morphism $\sigma\colon a\mapsto ab,b\mapsto ba$ is elementary. 
\end{example}

The notion of elementary morphism appears for the first time
in \cite{EhrenfeuchtRozenberg1978}.
The following result is from~\cite{KarhumakiManuchPlandowski2003}.
It also appears in \cite{BertheSteinerThuswaldnerYassawi2019}
with the stronger hypothesis that
$\sigma\colon A^*\to B^*$ is such that the incidence matrix of $\sigma$ has
rank $\Card(A)$. An independent proof is given in \cite{BealPerrinRestivo2023}.

\begin{proposition} \label{theoremIndecomposable}
Any elementary morphism is fully recognizable for aperiodic points. 
\end{proposition}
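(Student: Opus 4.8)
The plan is to reduce the statement to a combinatorial property of factorizations of a bi-infinite sequence over $\{\sigma(a) \mid a \in A\}$, exploiting that an elementary morphism is ``rigid'' in the sense that it cannot be simplified through a smaller alphabet. Suppose $\sigma\colon A^*\to C^*$ is elementary and that an aperiodic point $y\in C^\Z$ has two centered $\sigma$-representations $(x,k)$ and $(x',k')$ in $A^\Z$. We want to conclude $x=x'$ and $k=k'$. Using the equivalent formulation of recognizability stated after the definition, it suffices to consider $\sigma(x)=T^k(\sigma(x'))$ with $0\le k<\bigl|\,|\sigma(x'_0)|-|\sigma(x_0)|\,\bigr|$ and to show $k=0$, $x=x'$.

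First I would set up the bookkeeping: the two $\sigma$-representations induce two factorizations of the same bi-infinite word $y$ into blocks from the finite set $W=\{\sigma(a)\mid a\in A\}$ (all nonempty, since $\sigma$ is non-erasing, as noted in the excerpt). The cutting points of these two factorizations are two subsets of $\Z$; if they coincide, then aperiodicity plus a standard argument forces the block labels to agree and $k=0$, giving $x=x'$. So assume toward a contradiction that the two cutting-point sets differ. The key step is then to use this mismatch to build a nontrivial factorization $\sigma=\alpha\circ\beta$ through an alphabet $B$ with $\Card(B)<\Card(A)$, contradicting elementarity. Concretely: look at the ``synchronization classes'' of letters of $A$ — group together letters $a,a'$ whose images $\sigma(a),\sigma(a')$ interact (overlap in a shifted way) in the two factorizations of $y$; if the cutting points genuinely differ, at least two letters must get identified in a consistent way, yielding a coding $\beta\colon A^*\to B^*$ onto a strictly smaller alphabet and a morphism $\alpha\colon B^*\to C^*$ with $\sigma=\alpha\circ\beta$. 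This is the heart of the Ehrenfeucht--Rozenberg / Karhum\"aki--Ma\v{n}uch--Plandowski argument.

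The main obstacle I expect is precisely making the previous paragraph rigorous: showing that a mismatch in cutting points on a \emph{single} aperiodic point $y$ — rather than on all of $\cL(\sigma)$ or $X(\sigma)$ — suffices to extract an algebraic decomposition of $\sigma$. One must be careful that the identifications of letters coming from local overlaps in $y$ are globally consistent (transitivity of the synchronization relation) and that the resulting $\beta$ is genuinely onto a smaller alphabet, i.e.\ that at least one nontrivial identification occurs. The aperiodicity of $y$ is what rules out the degenerate periodic overlaps (as in the Thue--Morse and Example~\ref{example3.4} phenomena) and guarantees the overlap pattern propagates coherently in both directions; formalizing this ``propagation'' — typically via an unbordedness or defect-type argument on the finite set $W$ — is the technical core. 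Once the decomposition $\sigma=\alpha\circ\beta$ with $\Card(B)<\Card(A)$ is in hand, elementarity is contradicted, so no mismatch is possible, the cutting points coincide, and recognizability for aperiodic points follows.
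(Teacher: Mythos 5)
There is a genuine gap, and it sits exactly where you flag it yourself. The paper does not prove Proposition~\ref{theoremIndecomposable} at all: it quotes it from Karhum\"aki--Ma\v{n}uch--Plandowski (with an independent proof in B\'eal--Perrin--Restivo), because the statement \emph{is} essentially the defect theorem for bi-infinite words. Your reduction to that combinatorial statement is the right strategy, but your proposal then replaces the whole content of the theorem by a description of what one would like to do. Concretely, the step ``if the cutting points of the two factorizations differ, then at least two letters must get identified in a consistent way, yielding $\beta\colon A^*\to B^*$ with $\Card(B)<\Card(A)$ and $\sigma=\alpha\circ\beta$'' is not an argument: merging letters $a,a'$ whose images overlap in shifted positions does not produce morphisms $\alpha,\beta$ with $\sigma=\alpha\circ\beta$ (the images $\sigma(a),\sigma(a')$ of ``synchronized'' letters need not be equal, nor products of common pieces in any obvious way). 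What is actually needed is a set $F$ of at most $\Card(A)-1$ words with $\sigma(A)\subseteq F^*$, i.e.\ a drop in combinatorial rank, extracted from the two factorizations of a single aperiodic point; establishing this in the bi-infinite setting, with aperiodicity used to exclude the periodic counterexamples (Thue--Morse, Example~\ref{example3.4}), requires Fine--Wilf/periodicity-type arguments and a careful propagation analysis, and that is precisely the substance of the cited result. As written, your text proves the easy half and restates the hard half.

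Two smaller points. First, in the case where the two factorizations have the same cutting points, what you need is not aperiodicity but injectivity of $\sigma$ on letters: equal cuts with different labels means $\sigma(a)=\sigma(a')$ for some $a\ne a'$, which already factors $\sigma$ through a smaller alphabet and contradicts elementarity; this case is fine but should be argued that way. Second, if you intend the proof to be self-contained rather than a reduction to the literature, you must also handle the possibility that $\sigma$ is not injective on letters at the outset (then $\Card(\sigma(A))<\Card(A)$ and non-elementarity is immediate), so that the defect-type argument can be run on the set $\sigma(A)$ of cardinality $\Card(A)$. If instead you are content to invoke the bi-infinite defect theorem as a black box, then your write-up should say so explicitly; at present it presents the key lemma as a step to be ``formalized,'' which is exactly the gap.
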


\section{Representable $S$-adic shifts} \label{sec:representable-s-adic}
Contrary to sequences of non-erasing morphisms $\sigma$, a point in $X^{(n)}(\sigma)$ need not have a $\sigma_n$-representation in $X^{(n+1)}(\sigma)$ when $\sigma_n$ is erasing, as the following example shows. 

\begin{example} \label{ex:nonrepresentable}
Let the sequence of morphisms $\sigma$ be defined by
\begin{align*}
\sigma_0 & \colon a \mapsto a, b \mapsto \varepsilon, \\
\sigma_1 & \colon a \mapsto a, b \mapsto bb, c \mapsto ab, \\
\sigma_n & \colon a \mapsto a, b \mapsto bb, c \mapsto cab, \quad \text{for all } n \geq 2.
\end{align*}
Since $\sigma_{[1,n)}(a) = a$, $\sigma_{[1,n)}(b) = b^{2^{n-1}}$, $\sigma_{[1,n)}(c) = abab^2\cdots ab^{2^{n-2}}$ for all $n \ge 2$, we have $X^{(0)}(\sigma) = \{a^\infty\}$ and $X^{(1)}(\sigma)$ consists of the points in $\{a,b\}^\mathbb{Z}$ containing at most one~$a$, hence $a^\infty$ has no $\sigma_0$-representation in $X^{(1)}(\sigma)$.
\end{example}

We say that a sequence of morphisms
$\sigma$ is \emph{representable at level~$n$} 
if  every point in $X^{(n)}(\sigma)$ has at least one
$\sigma_n$-representation in $X^{(n+1)}(\sigma)$.
It is \emph{representable} if it is representable at each level.
We say that a sequence of morphisms is \emph{eventually representable} if there is an integer $M$ such that it is representable at each level at least equal to $M$.

Note that $X^{(n)}(\sigma)$ is the shift-closure of $\sigma_n(X^{(n+1)}(\sigma))$ if and only if $\sigma$ is representable at level~$n$.
 
The following lemma is proved in \cite[Lemma~4.2]{BertheSteinerThuswaldnerYassawi2019} for sequences of non-erasing morphisms.
We recall its proof to make clear where the non-erasing property is used. 

\begin{lemma} \label{l:representable}
Let $\sigma = (\sigma_n)_{n \geq 0}$ with $\sigma_n\colon A_{n+1}^* \to A_n^*$ be a sequence of morphisms.
If $\sigma_{[n,m)}$ is non-erasing, $0 \le n < m$, then every point in $X^{(n)}(\sigma)$ has at least one $\sigma_{[n,m)}$-representation in~$X^{(m)}(\sigma)$.
In particular, if $\sigma_n$ is non-erasing, then $\sigma$ is representable at level~$n$. 
\end{lemma}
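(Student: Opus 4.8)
The plan is to fix a point $y \in X^{(n)}(\sigma)$ and build a $\sigma_{[n,m)}$-representation $(x,k)$ of it with $x \in X^{(m)}(\sigma)$ by a compactness argument. First I would observe that, by definition of $\cL^{(n)}(\sigma)$, every factor $y_{[-p,p)}$ of $y$ is a factor of some word $\sigma_{[n,\ell)}(a)$ with $a \in A_\ell$ and $\ell \ge n$; enlarging $\ell$ if necessary (using that the $\sigma_j$ for $j \ge m$ map every letter to a nonempty word, or simply replacing $a$ by a letter of $A_\ell$ whose image under $\sigma_{[\ell, \ell')}$ contains $a$), we may assume $\ell \ge m$. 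Then $\sigma_{[n,\ell)}(a) = \sigma_{[n,m)}\bigl(\sigma_{[m,\ell)}(a)\bigr)$, so the factor $y_{[-p,p)}$ of $y$ lies inside $\sigma_{[n,m)}(w^{(p)})$ for some word $w^{(p)} = \sigma_{[m,\ell)}(a) \in \cL^{(m)}(\sigma)$, at a known offset. This is exactly the step where non-erasingness of $\sigma_{[n,m)}$ matters: it guarantees that the occurrence of $y_{[-p,p)}$ inside $\sigma_{[n,m)}(w^{(p)})$ determines, for each position of $y_{[-p,p)}$, which block $\sigma_{[n,m)}(w^{(p)}_j)$ it falls into, and in particular the offset $k^{(p)}$ of position $0$ within its block can be taken in the range $0 \le k^{(p)} < |\sigma_{[n,m)}(w^{(p)}_{j_0})|$ for the relevant index $j_0$, staying bounded because the images of letters under $\sigma_{[n,m)}$ have bounded length.

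Next I would pass to the limit. As $p \to \infty$, we get words $w^{(p)} \in \cL^{(m)}(\sigma)$ and offsets $k^{(p)}$, with designated index $j_0^{(p)}$ (the block containing position $0$ of $y$), such that reading $\sigma_{[n,m)}(w^{(p)})$ aligned so that block $j_0^{(p)}$ starts at position $-k^{(p)}$ reproduces $y$ on the window $[-p,p)$. Since $A_m$ is finite and the $k^{(p)}$ are bounded, by a standard diagonal/compactness extraction I obtain a bi-infinite sequence $x \in A_m^\Z$ all of whose factors are factors of the $w^{(p)}$'s — hence $x \in X^{(m)}(\sigma)$ — together with a fixed offset $k$ with $0 \le k < |\sigma_{[n,m)}(x_0)|$, such that $\sigma_{[n,m)}(x)$ shifted by $k$ agrees with $y$ on every finite window, i.e. $y = T^k(\sigma_{[n,m)}(x))$. (One must check here that $\sigma_{[n,m)}(x)$ is well-defined as a bi-infinite sequence, i.e. that $x$ has infinitely many non-erased letters on each side — but since $\sigma_{[n,m)}$ is non-erasing, \emph{every} letter is non-erased, so this is automatic; again the non-erasing hypothesis is used.) This gives the desired $\sigma_{[n,m)}$-representation of $y$ in $X^{(m)}(\sigma)$.

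Finally, the last sentence is the special case $m = n+1$: if $\sigma_n$ is non-erasing then $\sigma_{[n,n+1)} = \sigma_n$ is non-erasing, so every $y \in X^{(n)}(\sigma)$ has a $\sigma_n$-representation in $X^{(n+1)}(\sigma)$, which is precisely representability at level~$n$. The main obstacle is purely bookkeeping: keeping track of the alignment data (which block of $\sigma_{[n,m)}(w^{(p)})$ contains position $0$, and the corresponding offset) carefully enough that the compactness extraction yields a genuine centered representation rather than something off by an unbounded shift; this is where the non-erasing assumption does all the work, since for an erasing $\sigma_{[n,m)}$ the "block containing position $0$" need not be well-defined and the offsets need not be controllable — indeed Example~\ref{ex:nonrepresentable} shows the conclusion can fail outright.
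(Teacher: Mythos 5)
Your proposal follows essentially the same route as the paper: realize long central factors of $y$ as factors of images $\sigma_{[n,m)}(w)$ with $w\in\cL^{(m)}(\sigma)$, keep track of the block containing position $0$ and its (bounded) offset, and extract a limit $(x,k)$ by a diagonal/compactness argument, using non-erasingness to see that $\sigma_{[n,m)}(x)$ is a well-defined point whose shifted image exhausts~$y$. One caveat: your parenthetical justification for ``we may assume $\ell\ge m$'' does not work as stated --- nothing is assumed about the morphisms $\sigma_j$ for $j\ge m$ (they may well be erasing), and a letter $a\in A_\ell$ need not occur in the image $\sigma_{[\ell,\ell')}(b)$ of any letter $b$ at a higher level, so neither of your two suggested reasons is available. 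The claim is nevertheless true for the relevant (long) factors, for a simpler reason: for each fixed $N$ with $n\le N<m$ there are only finitely many words $\sigma_{[n,N)}(a)$, $a\in A_N$, all of bounded length, so once $|y_{[-p,p)}|$ exceeds this bound the factor can only come from some level $N\ge m$; since the diagonal argument only uses arbitrarily long central factors, this one-line fix closes the gap and the rest of your argument stands.
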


\begin{proof}
Let $y \in X^{(n)}(\sigma)$.
Then each word $y_{[-\ell,\ell)}$ is a factor of $\sigma_{[n,N)}(a)$ for some $a \in A_N$, $N \ge m$, hence $y_{[-\ell+i,\ell-j)} = \sigma_{[n,m)}(w)$ for some $w \in \cL^{(m)}(\sigma)$, $0 \le i,j < \max_{a\in A_m} |\sigma_{[n,m)}(a)|$.
Since $|w| \to \infty$ as $\ell \to \infty$, a Cantor diagonal argument gives a word $x \in X^{(m)}(\sigma)$ and $0 \le k < |\sigma_{[n,m)}(x_0)|$ such that $\sigma_{[n,m)}(x_{[-\ell,\ell)}) = y_{[|\sigma_{[n,m)}(x_{[-\ell,0)})|-k,|\sigma_{[n,m)}(x_{[0,\ell)})|-k)}$ for all $\ell \ge 1$. 
Since $\sigma_{[n,m)}$ is non-erasing, $(k,x)$ is a $\sigma_{[n,m)}$-representation of~$y$.
\end{proof}

\begin{lemma} \label{l:notrepresentableerasing}
If $\sigma$ is not representable at level~$n$, then $\sigma_{[n,m)}$ is erasing for all $m > n$. 
\end{lemma}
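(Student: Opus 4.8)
The plan is to prove the contrapositive: if $\sigma_{[n,m)}$ is non-erasing for some $m > n$, then $\sigma$ is representable at level~$n$. Phrasing it this way lets the argument feed directly into the already-established Lemma~\ref{l:representable}.

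First I would fix such an $m$ and observe that $\sigma_{[n+1,m)}$ is then non-erasing as well: since $\sigma_{[n,m)} = \sigma_n \circ \sigma_{[n+1,m)}$, any letter $a \in A_m$ with $\sigma_{[n+1,m)}(a) = \varepsilon$ would satisfy $\sigma_{[n,m)}(a) = \sigma_n(\varepsilon) = \varepsilon$, contradicting non-erasingness of $\sigma_{[n,m)}$. Next, given an arbitrary $y \in X^{(n)}(\sigma)$, I would apply Lemma~\ref{l:representable} to get a $\sigma_{[n,m)}$-representation $(x,k)$ of $y$ with $x \in X^{(m)}(\sigma)$, that is, $y = T^k(\sigma_{[n,m)}(x))$.

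Then I would set $z = \sigma_{[n+1,m)}(x)$ and verify two things. First, $z$ is a genuine point lying in $X^{(n+1)}(\sigma)$: it is a two-sided sequence because $\sigma_{[n+1,m)}$ is non-erasing, and any factor of $z$ is a factor of $\sigma_{[n+1,m)}(w)$ for some factor $w$ of $x$; as $w$ is a factor of some $\sigma_{[m,N)}(a)$, the word $\sigma_{[n+1,m)}(w)$ is a factor of $\sigma_{[n+1,m)}(\sigma_{[m,N)}(a)) = \sigma_{[n+1,N)}(a)$, hence lies in $\cL^{(n+1)}(\sigma)$. Second, $(z,k)$ is a $\sigma_n$-representation of $y$, which is immediate from $\sigma_{[n,m)} = \sigma_n \circ \sigma_{[n+1,m)}$: indeed $y = T^k(\sigma_{[n,m)}(x)) = T^k(\sigma_n(\sigma_{[n+1,m)}(x))) = T^k(\sigma_n(z))$. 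Since $y$ was arbitrary, every point of $X^{(n)}(\sigma)$ has a $\sigma_n$-representation in $X^{(n+1)}(\sigma)$, which is exactly representability at level~$n$.

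The only point needing a moment's care is that $\sigma_n(z)$ should be a legitimate two-sided sequence (the paper defines $\sigma_n$ on a point only when it has infinitely many non-erased letters on each side); but this is automatic here, because $\sigma_n(z) = \sigma_{[n,m)}(x)$ is bi-infinite by construction. Apart from that, the argument is a reindexing, so I do not anticipate a real obstacle: the lemma is essentially the observation that non-erasingness of the composite $\sigma_{[n,m)}$ forces, via Lemma~\ref{l:representable}, the existence of a $\sigma_n$-representation one level down.
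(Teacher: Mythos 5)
Your proof is correct and follows essentially the same route as the paper: the contrapositive, invoking Lemma~\ref{l:representable} to get a $\sigma_{[n,m)}$-representation and then passing to a $\sigma_n$-representation. The only difference is that you spell out the passage from the $\sigma_{[n,m)}$-representation $(x,k)$ to the $\sigma_n$-representation $(\sigma_{[n+1,m)}(x),k)$ (checking non-erasingness of $\sigma_{[n+1,m)}$ and membership in $X^{(n+1)}(\sigma)$), which the paper leaves implicit.
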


\begin{proof}
If $\sigma_{[n,m)}$ is non-erasing, then, by Lemma~\ref{l:representable}, each $y \in X^{(n)}(\sigma)$ has $\sigma_{[n,m)}$-representation in $X^{(m)}(\sigma)$, thus it also has a $\sigma_n$-representation in $X^{(n+1)}(\sigma)$, i.e., $\sigma$ is representable at level~$n$.
\end{proof}

\section{Recognizable $S$-adic shifts} \label{sectionRecognizabilitySadic}
A~sequence of morphisms $\sigma= (\sigma_n)_{n \geq 0}$ with $\sigma_n\colon A_{n+1}^*\to A_n^*$ is \emph{recognizable at level $n$} (respectively \emph{recognizable at level $n$ for aperiodic points}) if $\sigma_n$ is recognizable
(respectively recognizable for aperiodic points) in
$X^{(n+1)}(\sigma)$.
We say that $\sigma$ is \emph{recognizable} (respectively
\emph{recognizable for aperiodic points})  if it is recognizable (respectively recognizable for aperiodic points) at each nonnegative level~$n$, and $\sigma$ is \emph{eventually recognizable} (respectively
\emph{eventually recognizable for aperiodic points}) if there is a nonnegative
integer $M$ such that $\sigma$ is recognizable
(respectively recognizable for aperiodic points) at level~$n$ for each $n \geq M$.

We show that non-recognizability at level~$n$ and representability between levels $n{+}1$ and~$m$ implies non-recognizability between levels $n$ and~$m$. 

\begin{lemma} \label{l:recognizablerepresentable}
If $\sigma$ is not recognizable at level~$n$ and each point in $X^{(n+1)}(\sigma)$ has a $\sigma_{[n+1,m)}$-representation in $X^{(m)}(\sigma)$, then $\sigma_{[n,m)}$ is not recognizable in~$X^{(m)}(\sigma)$. 
The same statement holds for recognizability for aperiodic points.
\end{lemma}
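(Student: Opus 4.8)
The plan is to argue by contraposition on the definition of recognizability. Suppose $\sigma_{[n,m)}$ is recognizable in $X^{(m)}(\sigma)$; I will show $\sigma_n$ is recognizable in $X^{(n+1)}(\sigma)$. So let $y\in A_n^{\Z}$ have two centered $\sigma_n$-representations $(x,k)$ and $(x',k')$ with $x,x'\in X^{(n+1)}(\sigma)$; the goal is to conclude $(x,k)=(x',k')$. The natural move is to push both representations down one more level using $\sigma_{[n+1,m)}$: by hypothesis, $x$ has a $\sigma_{[n+1,m)}$-representation $(z,j)$ with $z\in X^{(m)}(\sigma)$, and similarly $x'$ has one, say $(z',j')$ with $z'\in X^{(m)}(\sigma)$. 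Composing, $(z, \text{something})$ and $(z', \text{something})$ become $\sigma_{[n,m)}$-representations of $y$, because $\sigma_{[n,m)} = \sigma_n\circ\sigma_{[n+1,m)}$ and composition of a $\sigma_n$-representation of $y$ with a $\sigma_{[n+1,m)}$-representation of its preimage yields a $\sigma_{[n,m)}$-representation of $y$. Passing to centered versions (using the remark after \eqref{eqsigmaRep} that any representation can be recentered by shifting the inner sequence), recognizability of $\sigma_{[n,m)}$ in $X^{(m)}(\sigma)$ forces the two centered $\sigma_{[n,m)}$-representations to coincide: $z=z'$ (up to the shift needed for centering) and the offsets agree.

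The main obstacle is bookkeeping with the offsets and the centering: from $z=z'$ (as points of $X^{(m)}(\sigma)$, after the shifts used to center) I must recover $x=x'$ and $k=k'$, not merely $\sigma_n(x)=\sigma_n(x')$ up to a shift. The key point making this work is that $\sigma_{[n+1,m)}$ is non-erasing on the relevant orbit — indeed, since $x$ has a $\sigma_{[n+1,m)}$-representation in $X^{(m)}(\sigma)$, and more to the point every point of $X^{(n+1)}(\sigma)$ does by hypothesis, Lemma~\ref{l:notrepresentableerasing} does not directly apply but the representation $(z,j)$ being an actual representation already means $\sigma_{[n+1,m)}(z)$ is well-defined and equals $T^{-j}(x)$, so $x = T^j(\sigma_{[n+1,m)}(z))$ is genuinely recovered from $z$ and $j$. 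Thus once $z=z'$ and the total offsets into $\sigma_{[n,m)}(z)$ agree, comparing the factorizations of $\sigma_{[n,m)}(z) = \sigma_n(\sigma_{[n+1,m)}(z))$ dictated by letters of $z$ (inner) versus letters of $x$ (outer) pins down where each $\sigma_n(x_i)$-block sits; since the centered representation $(x,k)$ has $0\le k<|\sigma_n(x_0)|$, the position of the cut at index $0$ is determined, hence $k=k'$, and then letter-by-letter $x_i=x'_i$.

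For the aperiodic-points version, the identical argument goes through verbatim: if $y$ is aperiodic, then its two $\sigma_{[n,m)}$-representations obtained above are representations of an aperiodic point, so recognizability of $\sigma_{[n,m)}$ in $X^{(m)}(\sigma)$ \emph{for aperiodic points} applies and gives the same conclusion. No extra hypothesis is needed because $y$ is already aperiodic by assumption; one does not need $x$, $x'$, or $z$ to be aperiodic. I would write the proof once, remarking at the end that restricting to aperiodic $y$ gives the second statement.
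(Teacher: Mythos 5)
Your proposal is correct and is essentially the paper's own argument run in contrapositive form: both lift the two level-$(n{+}1)$ representations to level $m$ using the representability hypothesis, compose them into (automatically centered) $\sigma_{[n,m)}$-representations of the same point, and rely on the unique decomposition of an offset inside $\sigma_{[n,m)}(z_0)$ into a cut of $\sigma_{[n+1,m)}(z_0)$ plus a proper-prefix offset of $\sigma_n$ (where the centering condition $0\le k<|\sigma_n(x_0)|$ is what excludes ambiguity at erased letters) to transfer equality back up, and your handling of the aperiodic case (only the represented point needs to be aperiodic) matches the paper. No gap.
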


\begin{proof}
This is proved in \cite[Lemma~3.5]{BertheSteinerThuswaldnerYassawi2019} for non-erasing morphisms; we recall the proof.
If $\sigma$ is not recognizable at level~$n$, then there exists $z \in X^{(n)}(\sigma)$ with two centered $\sigma_n$-representations $(y,\ell) \ne (y,\ell')$ in~$X^{(n+1)}(\sigma)$. 
Let $(x,k)$ and $(x',k')$ be centered $\sigma_{[n+1,m)}$-representations in~$X^{(m)}(\sigma)$ of $y$ and~$y'$, respectively.
Then $(x,|\sigma_n(y_{[-k,0)})|{+}\ell)$ and $(x',|\sigma_n(y'_{[-k',0)})|{+}\ell')$ are centered $\sigma_{[n,m)}$-representations of~$z$.
To see that the two representations are different, note that $y_{[-k,0)}$, $z_{[-\ell,0)}$, and $\sigma_n(y_{[-k,0)})z_{[-\ell,0)}$ are proper prefixes of $\sigma_{[n+1,m)}(x_0)$, $\sigma_n(y_0)$, and $\sigma_{[n,m)}(x_0)$ respectively.
Since each proper prefix of $\sigma_{[n,m)}(x_0)$ has a unique
decomposition as $\sigma_n(u) v$ with $u \in A_{n+1}^*$, $v \in A_n^*$, such that $ua$ is a prefix of $\sigma_{[n+1,m)}(x_0)$ and $v$ is a proper prefix of $\sigma_n(a)$ for some $a \in A_{n+1}$, $(x,|\sigma_n(y_{[-k,0)})|{+}\ell) = (x',|\sigma_n(y'_{[-k',0)})|{+}\ell')$ would imply that $k = k'$ and $\ell = \ell'$, thus $y = y'$, contradicting that $(y,\ell) \ne (y,\ell')$.
Therefore, $\sigma_{[n,m)}$ is not recognizable on~$X^{(m)}(\sigma)$.

Taking aperiodic points $y,y'$ proves the statement for aperiodic points.
\end{proof}

\section{Levels of recognizability and representability} \label{sec:levels-recogn-repr}

We can now state and prove our main results, which give bounds for the number of levels where a sequence of morphisms can be non-recognizable for aperiodic points or non-representable, in terms of the size of the alphabets. 

\begin{proposition} \label{p:main}
Let $\sigma= (\sigma_n)_{n \geq 0}$ with $\sigma_n \colon A_{n+1}^* \to A_n^*$ be a sequence of morphisms.
Let $m > n_1 > n_2 > \cdots > n_K \ge 0$, $K \ge 0$, be such that, for each $1 \le k \le K$, $\sigma$~is not recognizable at level~$n_k$ for aperiodic points or $\sigma$~is not representable at level~$n_k$.
Then we have $K < \Card(A_m)$.
Moreover, $K = \Card(A_m) {-} 1 \ge 1$ implies that $X^{(n_K)}(\sigma)$ has no aperiodic points.
\end{proposition}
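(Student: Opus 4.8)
The plan is to convert each ``bad'' level into a failure of injectivity of the composed morphism $\sigma_{[n,m)}$, and to observe that such a failure lets us factor $\sigma_{[n,m)}$ through a strictly smaller alphabet, so only boundedly many are possible. First I record what a bad level gives. Fix $k$. If $\sigma$ is not representable at level $n_k$, then by Lemma~\ref{l:notrepresentableerasing} (with $m>n_k$) the morphism $\sigma_{[n_k,m)}$ is erasing. If instead $\sigma$ is not recognizable at level $n_k$ for aperiodic points, then either $\sigma_{[n_k+1,m)}$ is erasing, whence $\sigma_{[n_k,m)}=\sigma_{n_k}\circ\sigma_{[n_k+1,m)}$ is erasing, or it is non-erasing, in which case Lemma~\ref{l:representable} supplies a $\sigma_{[n_k+1,m)}$-representation in $X^{(m)}(\sigma)$ for every point of $X^{(n_k+1)}(\sigma)$, and Lemma~\ref{l:recognizablerepresentable} then shows $\sigma_{[n_k,m)}$ is not recognizable in $X^{(m)}(\sigma)$ for aperiodic points. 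So in all cases $\sigma_{[n_k,m)}$ is erasing, or not recognizable in $X^{(m)}(\sigma)$ for aperiodic points. Since an elementary morphism is non-erasing and, by Proposition~\ref{theoremIndecomposable}, fully recognizable for aperiodic points and hence recognizable in the subshift $X^{(m)}(\sigma)$ for aperiodic points, it follows that $\sigma_{[n_k,m)}$ is \emph{not} elementary.

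The heart of the proof is then to build, by induction on $k=1,\dots,K$, factorizations $\sigma_{[n_k,m)}=\gamma_k\circ\delta_k$ with $\delta_k\colon A_m^*\to C_k^*$ and $\gamma_k\colon C_k^*\to A_{n_k}^*$ such that $\Card(A_m)>\Card(C_1)>\Card(C_2)>\cdots>\Card(C_K)\geq 1$. The base case is precisely the non-elementarity of $\sigma_{[n_1,m)}$, which yields a factorization through an alphabet $C_1$ with $\Card(C_1)<\Card(A_m)$. For the inductive step, write $\sigma_{[n_{k+1},m)}=\psi_{k+1}\circ\delta_k$ with $\psi_{k+1}:=\sigma_{[n_{k+1},n_k)}\circ\gamma_k\colon C_k^*\to A_{n_{k+1}}^*$; it suffices to show $\psi_{k+1}$ is not elementary, since then $\psi_{k+1}$ factors through some $C_{k+1}$ with $\Card(C_{k+1})<\Card(C_k)$, and composing the inner factor with $\delta_k$ gives $\delta_{k+1}$ and $\gamma_{k+1}$. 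If $\psi_{k+1}$ were elementary it would be non-erasing and fully recognizable for aperiodic points, and one derives a contradiction with the fact that $\sigma_{[n_{k+1},m)}=\psi_{k+1}\circ\delta_k$ is erasing or not recognizable in $X^{(m)}(\sigma)$ for aperiodic points, using the closure of recognizability for aperiodic points under composition — which forces the defect of $\sigma_{[n_{k+1},m)}$ to already be present in $\delta_k$, and one arranges (carrying suitable information along the induction, such as keeping $\delta_k$ non-erasing by pushing all erasure into $\gamma_k$ via dead letters of $C_k$, and keeping $\delta_k$ recognizable) that this is impossible.

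The chain then gives $1\leq\Card(C_K)\leq\Card(A_m)-K$, so $K<\Card(A_m)$ (the case $K=0$ being trivial). If $K=\Card(A_m)-1$, the strictly decreasing sequence forces $\Card(C_j)=\Card(A_m)-j$, so $\Card(C_K)=1$: thus $\sigma_{[n_K,m)}$ factors through a one-letter alphabet, hence sends every letter of $A_m$ — and therefore every word over $A_m$ — to a power of one fixed word $w$. Consequently, for $m'\geq m$ each word $\sigma_{[n_K,m')}(a)=\sigma_{[n_K,m)}(\sigma_{[m,m')}(a))$ is a power of $w$, while the words $\sigma_{[n_K,m')}(a)$ with $n_K\leq m'<m$ form a finite set and so have bounded length; hence every sufficiently long word of $\cL^{(n_K)}(\sigma)$ is a factor of $w^\infty$, so every point of $X^{(n_K)}(\sigma)$ lies in the (finite) orbit closure of $w^\infty$ and is periodic. (When $w=\varepsilon$, $X^{(n_K)}(\sigma)$ is even empty.)

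The step I expect to be the main obstacle is showing $\psi_{k+1}$ is not elementary when $\sigma_{[n_{k+1},m)}$ is \emph{erasing}: one must rule out that the failure is absorbed entirely into $\delta_k$ and invisible in $\psi_{k+1}$. Handling this cleanly seems to require tracking the set $E_k=\{a\in A_m:\sigma_{[n_k,m)}(a)=\varepsilon\}$, which only grows with $k$, and choosing the factorizations $\gamma_k\circ\delta_k$ compatibly with $E_k$ so that a genuinely new erased letter, just like a genuine non-recognizability, costs one unit of alphabet at each level.
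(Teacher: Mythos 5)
Your per-level reduction is fine: for each $k$ taken separately, the dichotomy via Lemmas~\ref{l:representable}, \ref{l:notrepresentableerasing} and \ref{l:recognizablerepresentable} does show that $\sigma_{[n_k,m)}$ is erasing or not recognizable in $X^{(m)}(\sigma)$ for aperiodic points, hence not elementary; and your endgame (a factorization through a one-letter alphabet forces $\cL^{(n_K)}(\sigma)$ to consist, up to finitely many words, of factors of powers of a single word, so no aperiodic points) is correct. The genuine gap is exactly where you flag it, and it is not confined to the erasing case: the inductive step claiming that $\psi_{k+1}=\sigma_{[n_{k+1},n_k)}\circ\gamma_k$ is non-elementary. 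Non-elementarity of the composite $\sigma_{[n_{k+1},m)}=\psi_{k+1}\circ\delta_k$ says nothing about $\psi_{k+1}$ unless you can rule out that the defect lives entirely in $\delta_k$. Your proposed contradiction via closure of recognizability under composition needs $\delta_k$ to be non-erasing \emph{and} recognizable for aperiodic points in $X^{(m)}(\sigma)$; but $\gamma_k,\delta_k$ come from an arbitrary factorization witnessing non-elementarity, and there is no lemma producing a factorization with a recognizable inner factor --- ``keeping $\delta_k$ recognizable along the induction'' is precisely the kind of statement the whole theorem is meant to establish, so arranging it is circular. Even the non-erasing normalization is delicate: pushing erasure into $\gamma_k$ via a dead letter can increase $\Card(C_k)$ and destroy the strict decrease that the counting argument needs. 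Concretely: if $\psi_{k+1}$ is elementary, an aperiodic point with two centered $\sigma_{[n_{k+1},m)}$-representations $(x,j)\ne(x',j')$ only yields, after applying full recognizability of $\psi_{k+1}$, that $\delta_k(x)$ and $\delta_k(x')$ agree as centered points with matching offsets --- which is no contradiction at all without recognizability of $\delta_k$.

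The paper's proof avoids this by never comparing against $X^{(m)}(\sigma)$ through the uncontrolled inner factors. It splices the factorization into the sequence itself: $\sigma'=(\sigma_0,\dots,\sigma_{n_{k-1}-1},\alpha_{k-1},\beta_{k-1},\dots,\beta_1,\sigma_m,\dots)$, which leaves $X^{(h)}$ unchanged for $h\le n_{k-1}$ and creates a genuine shift space over the smaller alphabet $B_{k-1}$, namely $X^{(n_{k-1}+1)}(\sigma')$. The original defect at level $n_k$ (not a defect of some long composition) is then fed directly into Lemma~\ref{l:recognizablerepresentable} relative to that shift, with the only needed hypothesis being representability of the points of $X^{(n_k+1)}(\sigma)$ over $B_{k-1}$; and when that representability fails, Lemma~\ref{l:notrepresentableerasing} forces $\sigma_{[n_k+1,n_{k-1})}\circ\alpha_{k-1}$ to be erasing, so $\sigma_{[n_k,n_{k-1})}\circ\alpha_{k-1}$ is non-elementary in every case, with no hypothesis whatsoever on the inner factors $\beta_i$. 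To repair your argument you would essentially have to adopt this device, so as written the proposal does not constitute a proof.
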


\begin{proof}
Let $m > n_1 > n_2 > \cdots > n_K \ge 0$ be as in the statement of the proposition. 
Since the proposition is trivial for $K = 0$, we assume that $K \ge 1$. 

We define $\alpha_0 \colon A_m^* \to A_m^*$ as the identity morphism and set $n_0 = m$, $B_0 = A_m$. 
For each $1 \leq k < K$,  we show inductively that the morphism
$\sigma_{[n_k,n_{k-1})} \circ \alpha_{k-1}$ is not elementary and
admits therefore a decomposition $\sigma_{[n_k,n_{k-1})} \circ
\alpha_{k-1} = \alpha_k \circ \beta_k$ (see Figure \ref{figure.proof})
with morphisms $\alpha_k \colon B_k^* \to A_{n_k}^*$, $\beta_k \colon B_{k-1}^* \to B_k^*$, for some alphabet $B_k$ satisfying $\Card(B_k) < \Card(B_{k-1})$. 

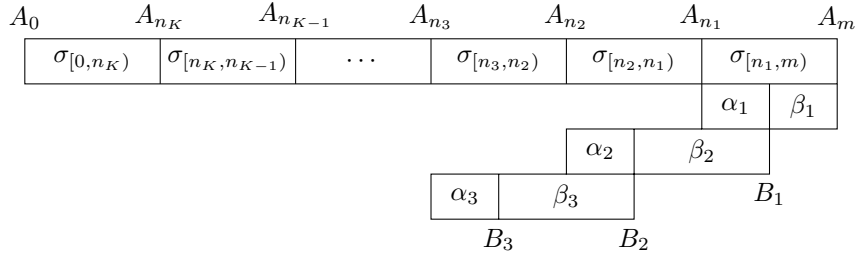
\begin{figure}[ht]
\centering
\tikzset{node/.style={draw,minimum size=.6cm,inner sep=0.4pt}}
\tikzset{title/.style={minimum size=0.5cm,inner sep=0pt}}
\begin{tikzpicture}
\draw(0,2.4)node[above]{$A_0$}--(10.8,2.4)node[above]{$A_m$}--(10.8,1.8)--(0,1.8)--cycle;
\draw(1.8,1.8)--(1.8,2.4)node[above]{$A_{n_K}$};
\draw(3.6,1.8)--(3.6,2.4)node[above]{$A_{n_{K-1}}$};
\draw(5.4,1.8)--(5.4,2.4)node[above]{$A_{n_3}$};
\draw(7.2,1.8)--(7.2,2.4)node[above]{$A_{n_2}$};
\draw(9,1.8)--(9,2.4)node[above]{$A_{n_1}$};
\node at (.9,2.1){$\sigma_{[0,n_K)}$};
\node at (2.7,2.1){$\sigma_{[n_K,n_{K-1})}$};
\node at (4.5,2.1){$\cdots$};
\node at (6.3,2.1){$\sigma_{[n_3,n_2)}$};
\node at (8.1,2.1){$\sigma_{[n_2,n_1)}$};
\node at (9.9,2.1){$\sigma_{[n_1,m)}$};
\draw(10.8,1.8)--(10.8,1.2)--(9,1.2)--(9,1.8)
(9.9,1.8)--(9.9,.6)node[below]{$B_1$}--(7.2,.6)--(7.2,1.2)--(9,1.2)
(8.1,1.2)--(8.1,0)node[below]{$B_2$}--(5.4,0)--(5.4,.6)--(7.2,.6) (6.3,.6)--(6.3,0)node[below]{$B_3$};
\node at (10.35,1.5){$\beta_1$};
\node at (9.45,1.5){$\alpha_1$};
\node at (9,.9){$\beta_2$};
\node at (7.65,.9){$\alpha_2$};
\node at (7.2,.3){$\beta_3$};
\node at (5.85,.3){$\alpha_3$};
\end{tikzpicture}
\caption{Alphabets and morphisms in the proof of Proposition~\ref{p:main}.}
 \label{figure.proof}
\end{figure}

Indeed, consider the sequence of morphisms 
\begin{displaymath}
\sigma' = (\sigma_0,\sigma_1,\dots,\sigma_{n_k},\dots,\sigma_{n_{k-1}-1},\alpha_{k-1},\beta_{k-1},\beta_{k-2},\dots,\beta_1,\sigma_m,\sigma_{m+1},\dots).
\end{displaymath}
Since $\alpha_{k-1} \circ \beta_{k-1} \circ \beta_{k-2} \circ \cdots \circ \beta_1 = \sigma_{[n_{k-1},m)}$, we have for all $0 \le h \le n_{k-1}$ that the languages $\cL^{(h)}(\sigma)$ and $\cL^{(h)}(\sigma')$ differ only by a finite set and hence $X^{(h)}(\sigma) = X^{(h)}(\sigma')$. 
If $\sigma$ is not representable at level~$n_k$, then $\sigma'$ is also not representable at level~$n_k$ and, by Lemma~\ref{l:notrepresentableerasing}, $\sigma_{[n_k,n_{k-1})} \circ \alpha_{k-1}$ is erasing, thus non-elementary.
If $\sigma$ is not recognizable at level~$n_k$ for aperiodic points
and each point in $X^{(n_k+1)}(\sigma)$ has a
$\sigma_{[n_k+1,n_{k-1})} \circ \alpha_{k-1}$-representation in
$X^{(n_{k-1}+1)}(\sigma')$, then $\sigma_{[n_k,n_{k-1})} \circ
\alpha_{k-1}$ is not recognizable for aperiodic points in $X^{(n_{k-1}+1)}(\sigma')$ by Lemma~\ref{l:recognizablerepresentable}, thus it is non-elementary. 
Finally, if there exists a point in $X^{(n_k+1)}(\sigma)$ without $\sigma_{[n_k+1,n_{k-1})} \circ \alpha_{k-1}$-representation in $X^{(n_{k-1}+1)}(\sigma')$, then $\sigma_{[n_k+1,n_{k-1})} \circ \alpha_{k-1}$ is erasing by Lemma~\ref{l:notrepresentableerasing}, hence $\sigma_{[n_k,n_{k-1})} \circ \alpha_{k-1}$ is erasing, thus non-elementary.

We get that $\Card(A_m) > \Card(B_1) > \Card(B_2) > \cdots > \Card(B_K) \ge 1$, thus $K < \Card(A_m)$.
If $K = \Card(A_m){-}1$, then $\Card(B_K) = 1$, hence~$X^{(n_K)}(\sigma)$ consists of a single periodic orbit, thus $\sigma$ is recognizable at level~$n_K$ for aperiodic points (and thus not representable at level~$n_K$ by the assumption on~$n_K$).
\end{proof}

Let $\sigma= (\sigma_n)_{n \geq 0}$ with $\sigma_n \colon A_{n+1}^*
\to A_n^*$ be a sequence of morphisms. The \emph{alphabet rank} of
$\sigma$ is $\liminf_{n\to\infty}
\Card(A_n)$.

\begin{theorem} \label{theoremBSTY}
Let $\sigma= (\sigma_n)_{n \geq 0}$ with $\sigma_n \colon A_{n+1}^*
\to A_n^*$ be a sequence of morphisms with finite alphabet rank.
Then $\sigma$ is eventually recognizable for aperiodic points and eventually representable. 

Moreover, if $r$ is the alphabet rank, the number of levels at which $\sigma$ is not recognizable for aperiodic points is bounded by $r{-}2$ and the number of levels at which $\sigma$ is not representable is bounded by $r{-}1$.
\end{theorem}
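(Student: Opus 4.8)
The plan is to derive the theorem as a short counting argument on top of Proposition~\ref{p:main}, which carries all the real content. Write $r$ for the alphabet rank $\liminf_{n\to\infty}\Card(A_n)$, which is finite by hypothesis. Since the $\Card(A_n)$ are positive integers, the definition of $\liminf$ yields at once that $\Card(A_m)=r$ holds for infinitely many~$m$; in particular, for every $N$ there is $m>N$ with $\Card(A_m)=r$, which is precisely the kind of ``upper'' level at which Proposition~\ref{p:main} is meant to be applied.

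First I would bound the number of ``bad'' levels. Let $B$ be the set of levels $n$ at which $\sigma$ is not recognizable at level~$n$ for aperiodic points or not representable at level~$n$, and let $P\subseteq B$ and $R\subseteq B$ be the subsets on which, respectively, the second and the first of these failures occurs. I claim $\Card(B)\le r-1$: if $B$ had $r$ distinct elements $n_1>\cdots>n_r$, then, choosing $m>n_1$ with $\Card(A_m)=r$, Proposition~\ref{p:main} would give $r<\Card(A_m)=r$, a contradiction. Hence $\Card(B)\le r-1$; in particular $B$ is finite, which already yields eventual recognizability for aperiodic points and eventual representability (take $M=1+\max B$, or $M=0$ if $B=\emptyset$), and since $P\subseteq B$ it gives the bound $r-1$ on the number of non-representable levels.

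It remains to improve the bound on $R$ to $r-2$; here I would use the ``moreover'' part of Proposition~\ref{p:main}. For $r\le 1$ the previous step already forces $B=\emptyset$, so $R=\emptyset$ and there is nothing to prove; assume $r\ge 2$. Suppose for contradiction that $\Card(R)\ge r-1$. Since $R\subseteq B$ and $\Card(B)\le r-1$, in fact $\Card(R)=r-1$; write $R=\{n_1>\cdots>n_{r-1}\}$ and $K=r-1\ge 1$, and choose $m>n_1$ with $\Card(A_m)=r$, so that $K=\Card(A_m)-1\ge 1$. Proposition~\ref{p:main} then tells us that $X^{(n_{r-1})}(\sigma)$ has no aperiodic point. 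On the other hand, $n_{r-1}\in R$ means $\sigma_{n_{r-1}}$ is not recognizable for aperiodic points in $X^{(n_{r-1}+1)}(\sigma)$, so there is an aperiodic $z\in A_{n_{r-1}}^{\Z}$ with a centered $\sigma_{n_{r-1}}$-representation $(x,k)$ in $X^{(n_{r-1}+1)}(\sigma)$. But $\sigma_{n_{r-1}}(x)\in X^{(n_{r-1})}(\sigma)$: any factor of $\sigma_{n_{r-1}}(x)$ is a factor of $\sigma_{n_{r-1}}(u)$ for some $u\in\cL^{(n_{r-1}+1)}(\sigma)$, hence a factor of $\sigma_{n_{r-1}}\bigl(\sigma_{[n_{r-1}+1,M)}(a)\bigr)=\sigma_{[n_{r-1},M)}(a)$ for suitable $a\in A_M$, $M>n_{r-1}$, and so lies in $\cL^{(n_{r-1})}(\sigma)$. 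Therefore $z=T^{k}(\sigma_{n_{r-1}}(x))$ is an aperiodic point of $X^{(n_{r-1})}(\sigma)$, a contradiction. Hence $\Card(R)\le r-2$.

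The entire difficulty is hidden in Proposition~\ref{p:main}, which I am using as a black box; the theorem itself is pure bookkeeping. Inside that proposition the delicate point is the inductive construction of a strictly decreasing chain of alphabets: at each bad level, a case analysis based on Lemmas~\ref{l:notrepresentableerasing} and~\ref{l:recognizablerepresentable} shows that the composed morphism in play is erasing or non-recognizable for aperiodic points, hence non-elementary (being erasing, or by Proposition~\ref{theoremIndecomposable}), so it factors through a strictly smaller alphabet. For the counting argument above, the only things requiring care are choosing the witnessing level $m$ strictly above all the chosen bad levels and with $\Card(A_m)=r$, and dispatching the two degenerate cases $r\le 1$.
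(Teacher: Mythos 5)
Your proof is correct and follows essentially the same route as the paper's: apply Proposition~\ref{p:main} at a level $m$ above the bad levels with $\Card(A_m)=r$ (such $m$ exist since the $\liminf$ is attained infinitely often), getting at most $r-1$ bad levels, and then use the ``moreover'' clause to exclude $r-1$ non-recognizability levels. The step you spell out -- that absence of aperiodic points in $X^{(n_K)}(\sigma)$ forces recognizability at level $n_K$ for aperiodic points, since any aperiodic point with a centered $\sigma_{n_K}$-representation in $X^{(n_K+1)}(\sigma)$ would itself lie in $X^{(n_K)}(\sigma)$ -- is exactly the argument the paper leaves implicit (it appears inside the proof of Proposition~\ref{p:main}), so your write-up is just a slightly more detailed version of the same bookkeeping.
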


\begin{proof}
Suppose that $\sigma$ is not representable at $K = r < \infty$ levels, then applying Proposition~\ref{p:main} for some $m$ which is larger than these levels and satisfies $\Card(A_m) = K$ gives a contradiction.
Similarly, we cannot have $K = r{-}1$ levels where $\sigma$ is not recognizable for aperiodic points by Proposition~\ref{p:main} because the level $n_K$ in Proposition~\ref{p:main} can only be non-representable for $K = r{-}1$.
\end{proof}

Note that the condition of finite alphabet rank holds in particular when the sizes of the alphabets are bounded.

In the particular case of a constant sequence of morphisms $\sigma' =
(\sigma,\sigma,\dots)$, all shifts $X^{(n)}(\sigma')$ are equal to the
shift space $X(\sigma)$ of the morphism~$\sigma$, and
non-recognizability of $\sigma$ in $X(\sigma)$ for aperiodic points means that $\sigma'$ is non-recognizable at aperiodic points at all levels. 
Since this is not possible by Theorem~\ref{theoremBSTY}, this proves the main result of
\cite{BealPerrinRestivo2023}, in a way that is simpler than all the previous proofs of recognizability. 

\begin{corollary} \label{theoremBSTY4}
Any morphism $\sigma$ is recognizable for aperiodic points in $X(\sigma)$.
\end{corollary}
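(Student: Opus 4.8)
The goal is to deduce Corollary~\ref{theoremBSTY4} from Theorem~\ref{theoremBSTY}. The plan is to apply the theorem to a suitably chosen constant sequence of morphisms and then argue by contradiction.

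\medskip

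\noindent\textit{Proof proposal.} First I would form the constant sequence $\sigma' = (\sigma, \sigma, \dots)$, where at every level the alphabet is $A$ and the morphism is the given $\sigma \colon A^* \to A^*$. As noted in the text preceding the corollary, all languages $\cL^{(n)}(\sigma')$ coincide with $\cL(\sigma)$, and hence every shift space $X^{(n)}(\sigma')$ equals $X(\sigma)$. Since the alphabet is constant, the alphabet rank of $\sigma'$ is $\Card(A) < \infty$, so Theorem~\ref{theoremBSTY} applies: $\sigma'$ is eventually recognizable for aperiodic points. In particular there is some level $n$ at which $\sigma'$ is recognizable for aperiodic points, i.e.\ $\sigma = \sigma_n$ is recognizable for aperiodic points in $X^{(n+1)}(\sigma') = X(\sigma)$. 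That is exactly the assertion of the corollary.

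\medskip

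\noindent The only subtlety to address is that the definition of ``recognizable at level $n$ for aperiodic points'' refers to recognizability of $\sigma_n$ in $X^{(n+1)}(\sigma')$; since this shift is $X(\sigma)$ and $\sigma_n = \sigma$, the level-$n$ statement is literally the statement ``$\sigma$ is recognizable in $X(\sigma)$ for aperiodic points,'' which does not depend on $n$. Thus ``recognizable at some level'' and ``recognizable at all levels'' are equivalent here, and the existence of one good level already suffices. I do not expect a genuine obstacle: the argument is a direct specialization of Theorem~\ref{theoremBSTY}, and the main point is simply to observe that constant sequences have finite alphabet rank and that their shift spaces are level-independent.
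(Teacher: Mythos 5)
Your proposal is correct and follows essentially the same route as the paper: the paper also applies Theorem~\ref{theoremBSTY} to the constant sequence $\sigma'=(\sigma,\sigma,\dots)$, notes that all $X^{(n)}(\sigma')$ equal $X(\sigma)$, and concludes that non-recognizability would mean failure at every level, contradicting eventual recognizability. Your observation that the level statements are literally identical, so one good level suffices, is exactly the point the paper makes (phrased contrapositively).
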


In the same way, we have a simple proof of \cite[Proposition~5.1]{BealPerrinRestivo2023} concerning the representability of~$\sigma$. 

\begin{corollary} \label{c:representable}
For any morphism $\sigma$, any point in $X(\sigma)$ has a $\sigma$-representation in $X(\sigma)$.
\end{corollary}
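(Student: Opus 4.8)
The plan is to derive this as a direct consequence of Theorem~\ref{theoremBSTY} applied to the constant sequence $\sigma' = (\sigma, \sigma, \dots)$, exactly as Corollary~\ref{theoremBSTY4} was derived. First I would observe that for the constant sequence, every alphabet $A_n$ equals $A$, so the alphabet rank is $r = \Card(A) < \infty$, and every shift $X^{(n)}(\sigma')$ coincides with $X(\sigma)$. A $\sigma$-representation of a point $y \in X(\sigma)$ in $X(\sigma)$ is precisely a $\sigma_n$-representation of $y \in X^{(n)}(\sigma')$ in $X^{(n+1)}(\sigma')$; hence $\sigma'$ is representable at level~$n$ if and only if every point of $X(\sigma)$ has a $\sigma$-representation in $X(\sigma)$. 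This last condition does not depend on~$n$.

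Next I would invoke the bound from Theorem~\ref{theoremBSTY}: the number of levels at which $\sigma'$ is not representable is at most $r-1$, in particular finite. But, by the level-independence just noted, if $\sigma'$ were non-representable at one level it would be non-representable at every level, giving infinitely many such levels — a contradiction unless there are none. Therefore $\sigma'$ is representable at every level, and in particular at level~$0$, which is exactly the statement that every point in $X(\sigma)$ has a $\sigma$-representation in $X(\sigma)$.

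The argument involves essentially no obstacle: the only point requiring a moment's care is the identification $X^{(n)}(\sigma') = X(\sigma)$ together with the matching of representations, which is immediate from the definitions in Sections~\ref{sectionSadic} and~\ref{sectionRecognizability} since $\sigma_{[n,m)} = \sigma^{m-n}$ for the constant sequence and $\cL^{(n)}(\sigma') = \cL(\sigma)$. An alternative, more self-contained route would be to apply Lemma~\ref{l:representable} directly when $\sigma$ is non-erasing, and otherwise to reduce to the non-erasing case; but piggybacking on Theorem~\ref{theoremBSTY} is shorter and parallels the treatment of Corollary~\ref{theoremBSTY4}, so that is the approach I would present.
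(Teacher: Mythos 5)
Your proposal is correct and follows essentially the same route as the paper: the paper also deduces Corollary~\ref{c:representable} by applying Theorem~\ref{theoremBSTY} to the constant sequence $\sigma'=(\sigma,\sigma,\dots)$, noting that all $X^{(n)}(\sigma')$ equal $X(\sigma)$, so non-representability at one level would mean non-representability at every level, contradicting eventual representability.
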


Finally, we consider the tightness of the bounds in Proposition~\ref{p:main} and Theorem~\ref{theoremBSTY}.
We have already seen in Example~\ref{ex:nonrepresentable} that a sequence can be non-representable at level $0$ with $\Card(A_1) = 2$.
An example of a primitive sequence of morphisms with $\Card(A_n) = 3$ for all $n \ge 1$ that is not recognizable for aperiodic points at level~$0$ is given in \cite[Example~4.3]{BertheSteinerThuswaldnerYassawi2019}.
The following example shows that we can have $r{-}2$ levels of
non-recognizability for aperiodic points, where $r$ is the finite alphabet rank.

\begin{example} \label{ex:Knonrec}
Let $K \ge 1$, $A_n = \{a_0,a_1,\dots,a_n\}$ for $0 \le n \le K$, $A_n = \{a_0,a_1,\dots,a_{K+1}\}$ for $n > K$, and 
\begin{align*}
\sigma_n & \colon a_i \mapsto a_0 a_i a_0 \text{ for all } 0 \le i \le n,\ a_{n+1} \mapsto a_n, & & \text{for } 0 \le n \le K, \\
\sigma_n & \colon a_i \mapsto a_0 a_i a_0 \text{ for all } 0 \le i \le K+1, & & \text{for } n > K.
\end{align*}
Then, for all $0 \le n \le K+1$, $X^{(n)}(\sigma)$ consists of the closure
of the shift orbits of the sequences $\cdots a_0a_0\cdot a_ia_0a_0 \cdots$, $0 \le i \le n$. 
For $0 \le n \le K$, we have 
\begin{displaymath}
\sigma_n(\cdots a_0a_0\cdot a_na_0a_0 \cdots) = \cdots a_0\cdot a_0a_na_0a_0 \cdots = \sigma_n(\cdots a_0\cdot a_0a_{n+1}a_0a_0 \cdots),
\end{displaymath}
thus $\sigma$ is not recognizable at level $n$ for aperiodic points for all $1 \le n \le K$ (and not recognizable at level~$0$).
\end{example}

Example~\ref{ex:Knonrec} can be easily modified to obtain a sequence of morphisms that is not recognizable at any level $n$ for aperiodic points.

\begin{example}
Let $A_n = \{a_0,a_1,\dots,a_{n+1}\}$ for all $n \ge 0$, and
\begin{displaymath}
\sigma_n \colon a_i \mapsto a_0 a_i a_0 \text{ for all } 0 \le i \le n{+}1,\ a_{n+2} \mapsto a_{n+1}, \quad \text{for all } n\ge 0.
\end{displaymath}
Then $\sigma$ is not recognizable at level $n$ for aperiodic points for all $n \ge 0$ because
\begin{displaymath}
\sigma_n(\cdots a_0a_0\cdot a_{n+1}a_0a_0 \cdots) = \cdots a_0\cdot a_0a_{n+1}a_0a_0 \cdots = \sigma_n(\cdots a_0\cdot a_0a_{n+2}a_0a_0 \cdots).
\end{displaymath}
\end{example}

We do not know whether the bound of $r{-}1$ levels of
non-representability is tight, where $r$ is the finite alphabet rank.

\smallskip
A final remark concerns the existence of a recognizable $S$-adic representation for a given shift space~$X$. 
Can one always modify the morphisms of a sequence~$\sigma$ in such a way that $X = X^{(0)}(\sigma')$ with $\sigma'$ recognizable (keeping properties of $\sigma$
like primitivity, constant length, etc..)?
The answer is known in the particular case of automatic shifts, which are $S$-adic shifts defined by sequences $(\varphi,\sigma,\sigma,\ldots)$, where $\sigma$ is a morphism of constant length and $\varphi$ a letter coding (that is, has constant length~$1$).
It has been shown in \cite{MuellnerYassawi2021} that, for aperiodic automatic shifts generated by a primitive morphism~$\sigma$, we can assume w.l.o.g.\ that $\varphi$ is injective on $X(\sigma)$, thus $(\varphi,\sigma,\sigma,\ldots)$ is recognizable. 
It is not known if a similar result holds for general morphic shifts, i.e., when $\sigma$ is not of constant length. 
Other results exhibiting recognizable sequences of morphisms can be found in~\cite{Espinoza2023}.


\section*{Acknowledgments}
This work was supported by the Agence Nationale
de la Recherche (ANR-18-CE40-0007 and ANR-22-CE40-0011).

\bibliographystyle{plain}
\bibliography{recognizability}
\end{document}